\newtheorem{theorem}{Theorem}
\theoremstyle{definition}
\DeclareFontFamily{U}{wncy}{}
\DeclareFontShape{U}{wncy}{m}{n}{<->wncyr10}{}
\DeclareSymbolFont{mcy}{U}{wncy}{m}{n}
\DeclareMathSymbol{\Sh}{\mathord}{mcy}{"58}
\begin{document}

\title{Practical Modulo Sampling: Mitigating High-Frequency Components}

\author{Yhonatan Kvich, ~\IEEEmembership{Graduate Student Member,~IEEE}, Shlomi Savariego, Moshe Namer, Yonina C. Eldar,~\IEEEmembership{Fellow,~IEEE}
\thanks{Y. Kvich, S. Savariego, M. Namer and Y. C. Eldar are with the Faculty of Mathematics and Computer Science, Weizmann Institute of Science, Israel.}
\thanks{This research was supported by the European Research Council (ERC) under the European Union’s Horizon 2020 research and innovation program (grant No. 101000967), by the Israel Science Foundation (grant No. 536/22), by the Tom and Mary Beck Center for Renewable Energy as part of the Institute for Environmental Sustainability (IES) at the Weizmann Institute of Science.}
}
        
\maketitle

\begin{abstract}
	Recovering signals within limited dynamic range (DR) constraints remains a central challenge for analog-to-digital converters (ADCs). To prevent data loss, an ADC’s DR typically must exceed that of the input signal. Modulo sampling has recently gained attention as a promising approach for addressing DR limitations across various signal classes. However, existing methods often rely on ideal ADCs capable of capturing the high frequencies introduced by the modulo operator, which is impractical in real-world hardware applications. This paper introduces an innovative hardware-based sampling approach that addresses these high-frequency components using an analog mixer followed by a Low-Pass Filter (LPF). This allows the use of realistic ADCs, which do not need to handle frequencies beyond the intended sampling rate. Our method eliminates the requirement for high-specification ADCs and demonstrates that the resulting samples are equivalent to those from an ideal high-spec ADC. Consequently, any existing modulo recovery algorithm can be applied effectively. We present a practical hardware prototype of this approach, validated through both simulations and hardware recovery experiments. Using a recovery method designed to handle quantization noise, we show that our approach effectively manages high-frequency artifacts, enabling reliable modulo recovery with realistic ADCs. These findings confirm that our hardware solution not only outperforms conventional methods in high-precision settings but also demonstrates significant real-world applicability.

\end{abstract}

\begin{IEEEkeywords}
Modulo sampling, dynamic range, unlimited sampling, unlimited dynamic range, efficient sampling.
\end{IEEEkeywords}

\section{Introduction}
\label{sec:intro}

Analog-to-digital converters (ADCs) play a vital role in converting analog signals into a digital format for processing within digital signal processing systems. The cost and power requirements of ADCs escalate with higher sampling rates, emphasizing the importance of operating at the minimum necessary rate for efficient sampling \cite{eldar2015sampling,mishali2011sub}. The Shannon-Nyquist sampling theorem states that bandlimited (BL) signals can be accurately represented by uniform samples taken at a rate at least double the maximum frequency present in the signal. Sampling close to the signal's Nyquist rate is advantageous.
Another critical consideration is the dynamic range (DR) of ADCs. To prevent signal clipping and consequent information loss, as depicted in Fig. \ref{fig:compare}(a), an ADC's DR must exceed that of the input analog signal.

\begin{figure}[htb]
	\begin{minipage}[b]{\linewidth}
		\centering
		\centerline{\includegraphics[width=\columnwidth]{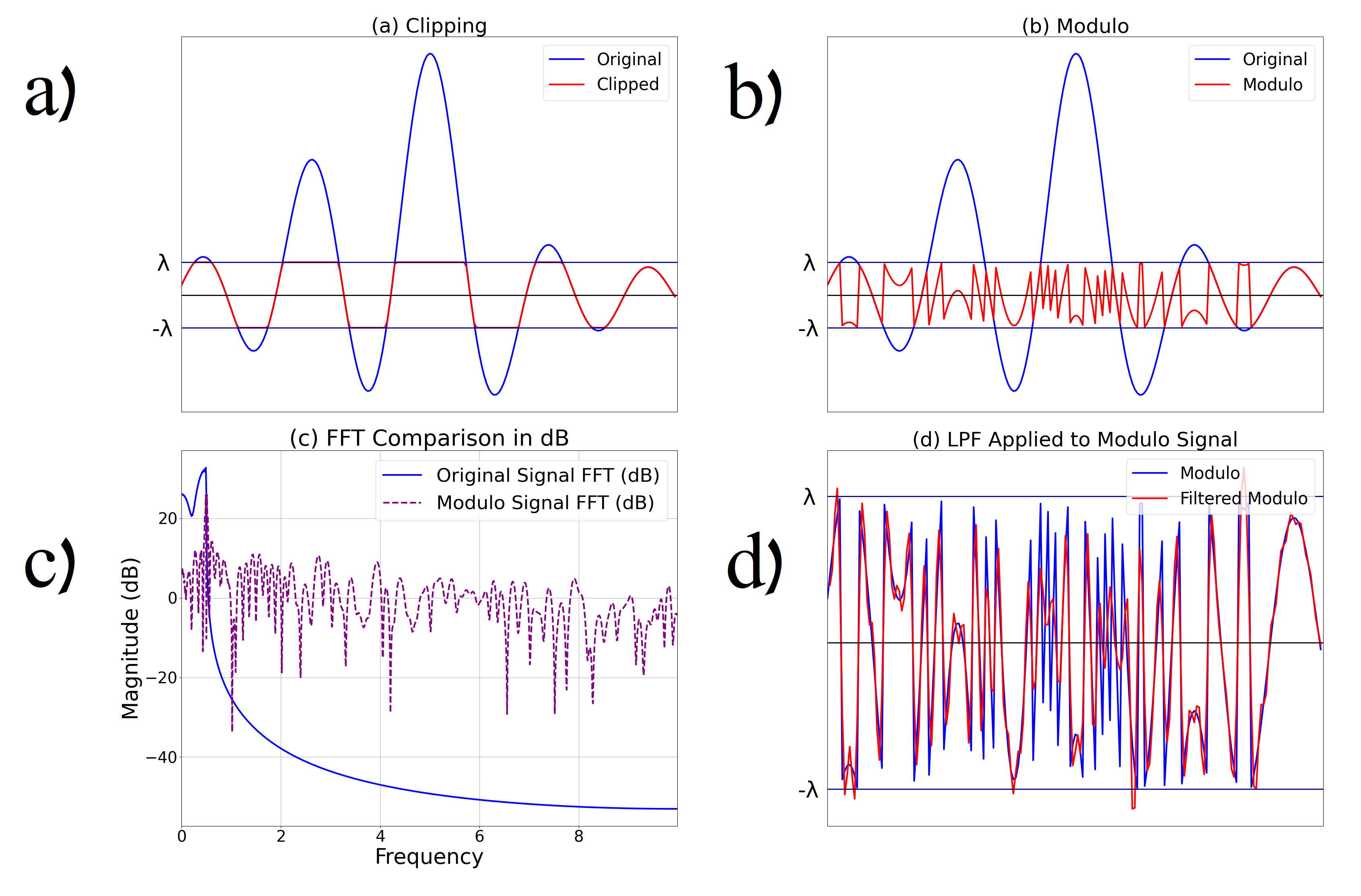}}
	\end{minipage}
	\caption{(a) BL signal alongside its clipped version, sampled with an ADC with dynamic range \([- \lambda, \lambda]\). (b) BL signal and the result of applying the modulo operation to it. (c) Fourier transforms of both the original BL signal (blue) and the modulo signal (red), highlighting the frequency components introduced by the modulo operation. (d) Modulo signal (blue) and the filtered modulo signal after removing frequencies above 5 (red), illustrating the effect of bandwidth limitation.}
	\label{fig:compare}
\end{figure}

Various strategies have emerged to address clipping. Techniques without preprocessing leverage inherent correlation in BL signal samples taken above the Nyquist rate, reconstructing missing information due to clipping through oversampling \cite{marks1983restoring, marks1984error}. Alternatively, exploiting spectral gaps in multiband systems can discern original from clipped signals \cite{abel1991restoring, rietman2008clip}. However, these methods either demand significant oversampling \cite{marks1983restoring, marks1984error} or prior knowledge of spectral gaps \cite{abel1991restoring, rietman2008clip} without providing theoretical guarantees. Clipping can also be mitigated through attenuation, though this risks reducing low-amplitude signals below the noise floor. Variable gain attenuators, like automatic gain controls (AGCs) and companders, adjust to preserve signal integrity without disproportionately affecting smaller amplitude signals. AGCs use feedback-regulated amplifiers to maintain consistent output levels \cite{perez2011automatic, mercy1981review}, while companding adjusts gain inversely proportional to signal amplitude. However, companding, like clipping, introduces nonlinear distortion and broadens signal bandwidth.
Implementing companding-based solutions at minimal sampling rates faces challenges due to monotonicity, differentiability requirements, and finite energy output \cite{landau1960recovery, landau1961recovery}.

An alternative strategy involves applying a modulo operation to the input signal before sampling to limit its DR, a technique applied in high-dynamic-range ADCs, or self-reset ADCs, in imaging contexts \cite{park2007wide, sasagawa2015implantable, yuan2009activity, krishna2019unlimited}. This approach, alongside storing modulo signal samples, often involves capturing additional data, such as folding extent for each sample or folding direction, complicating sampling circuitry while simplifying signal reconstruction from folded samples. See Fig. \ref{fig:compare}(b) for a visual representation.

The concept of unlimited sampling, relies solely on folded or modulo samples for signal recovery, demonstrating that sampling above the Nyquist rate enables unique identification of BL signals from modulo samples \cite{bhandari2020unlimited}. This method, extending Itoh’s unwrapping algorithm, shows that a sufficient oversampling rate allows computation of original signal higher-order differences from modulo samples, facilitating signal reconstruction through cumulative summation of these differences. However, this technique's effectiveness diminishes in noisy environments, requiring significantly higher oversampling rates for reliable recovery \cite{bhandari2020unlimited}. Subsequent improvements have aimed to reduce required sampling rates, apply the technique to various signal models, and explore hardware implementations for high-dynamic-range ADCs using modulo operations \cite{romanov2019above, lu2020high, bhandari2021unlimited, rudresh2018wavelet, bhandari2018unlimited, musa2018generalized, prasanna2020identifiability, ji2019folded, Bhandari_Krahmer_2020, Guo_Bhandari_2023, zhao2023unlimited, zhang2024line}. Despite these advancements, challenges remain regarding missing theoretical guarantees, stability concerns, the need for smooth and monotone operators, and reliance on higher-than-Nyquist sampling rates.

Azar et al. \cite{azar2022robust, Azar_Mulleti_Eldar_2022a} introduced a recovery algorithm capable of reconstructing BL signals from modulo samples slightly above the Nyquist rate, even under various noise conditions. This approach was further refined in \cite{Shah_Mulleti_Eldar_2023} through the incorporation of a sparsity assumption to reduce computational complexity. Additionally, Mulleti et al. \cite{mulleti2024modulo} explored applying modulo sampling to finite-rate-of-innovation (FRI) signals. The methodologies developed for both BL and FRI signals have been successfully implemented in hardware \cite{mulleti2023hardware}.
Bernardo et al. \cite{bernardo2024modulo} introduced a new recovery algorithm that utilizes 1-bit cross-level information. The authors provided proof of recovery and demonstrated robustness to quantization noise for oversampling rates greater than 3, assuming the system contains at least 4 bits. They showed superior results compared to classical sampling at high sampling rates.
While the bandlimitation assumption often serves as a practical approximation, numerous signals offer more precise representations through alternative bases or possess distinct structures within the Fourier domain \cite{unser2000sampling, eldar2009compressed, bhandari2018unlimited, rudresh2018wavelet, prasanna2020identifiability, ji2022unlimited, mulleti2024modulo}. Shift-invariant (SI) spaces are especially important in sampling theory, where signals in these spaces are represented as linear combinations of shifts of a set of generating functions \cite{eldar2015sampling, eldar2009compressed, deboor1994structure, christensen2004oblique, aldroubi2001nonuniform, bhandari2011shift}. Kvich and Eldar \cite{kvich2024Modulo, kvich2024Modulo2} proposed a recovery method for SI signals using modulo sampling, requiring a sampling rate only slightly above the minimal threshold.



An ADC fundamentally operates by alternating between two phases: track-and-hold (T/H) and quantization \cite[Section 14.3.4]{eldar2015sampling}. During the T/H phase, the ADC follows the signal's variations. Once an accurate tracking is achieved, the ADC holds this value steady, allowing the quantizer to transform the signal's amplitude into a digital representation. These steps must be completed before acquiring the next sample. It is common in signal processing to idealize the ADC as a pointwise sampler that captures the signal at a consistent rate of samples per second. Nonetheless, due to the inherent limitations of analog circuits, the T/H function has a finite frequency tracking capability and cannot follow signals that change too quickly. Practically, a Low-Pass Filter (LPF) with a specific cutoff frequency approximates the T/H function's bandwidth limitation. Commercial ADCs typically specify this internal LPF cutoff frequency to be higher than the maximum sampling rate, but within the same order of magnitude \cite{mishali2011sub}.

This paper presents a hardware-based approach to addressing the high-frequency components introduced by the analog modulo operator, which results in a signal with an expanded bandwidth. Fig. \ref{fig:compare}(c) illustrates this, showing that the ``folded" signal contains significantly higher frequencies than the original signal. Most existing works assume an ideal pointwise sampler, where the ADC can capture this entire extended band, necessitating a much more advanced ADC than required for the original signal. However, this assumption presents challenges in practical scenarios, as high-specification ADCs are neither cost-effective nor feasible. Directly using an ADC suited to the original signal’s bandwidth on the ``folded" signal would remove these high-frequency components during sampling, introducing distortions that current recovery methods do not address. Fig. \ref{fig:compare}(d) demonstrates this effect, as the modulo signal becomes distorted after passing through a LPF, even when the cutoff frequency is well above the Nyquist rate of the input signal.

We propose a new sampling approach inspired by the modulated wideband converter (MWC) \cite{eldar2015sampling, mishali2010theory, mishali2011xampling}. Our system uses an analog mixer applied to the ``folded" signal, followed by a LPF and sampling. By addressing the high-frequency components in the analog domain, this approach eliminates the need for ADCs capable of handling the entire extended bandwidth. ADCs suited to the input signal’s original band rather than the wide-band ``folded" signal will be referred to as ``realistic ADCs". We provide proof that our approach yields samples identical to those obtained from a high-specification ADC. This allows the use of any existing modulo recovery algorithm for BL signals, enabling signal recovery at any rate slightly above the Nyquist rate of the original signal \cite{azar2022robust, Azar_Mulleti_Eldar_2022a, mulleti2023hardware, Shah_Mulleti_Eldar_2023}. 
We then present a hardware prototype based on this theoretical approach, along with comprehensive simulations and hardware recovery experiments using only realistic ADCs. Our experiments employ a modulo recovery algorithm designed to handle quantization noise \cite{bernardo2024modulo}.

We perform several comparisons to evaluate the effectiveness of our approach. First, we consider ideal modulo sampling, where a pointwise sampler is used to capture the wideband folded signal without any limitations on the ADC's bandwidth capabilities. Second, we compare against directly sampling the ``folded" signal using a realistic ADC, where a LPF is applied prior to sampling to handle bandwidth limitations. Finally, we evaluate performance against a classical ADC with no DR limitations but constrained to a finite number of bits. While this configuration provides a useful benchmark, it is not very realistic due to the significant power requirements needed to overcome DR constraints.
We demonstrate that directly using a realistic ADC to sample the folded signal leads to errors that are orders of magnitude larger than quantization noise, resulting in poor reconstruction accuracy. In contrast, our proposed approach effectively mitigates the errors caused by ignoring high-frequency components, achieving performance that is close to the ideal sampler case. The classical sampler performs worse than both the ideal modulo sampler and our proposed realistic modulo sampling method. These comparisons underscore the importance and benefits of addressing high-frequency artifacts using realistic hardware solutions.


This paper is organized as follows. Section \ref{sec:background} provides background information and the problem statement, highlighting the challenges posed by high-frequency components in modulo sampling. We then introduce our sampling approach, along with a proof that the resulting samples align with the ideal modulo samples. Additionally, we present the modulo recovery method used in this context, specifically designed to address quantization noise.
Section \ref{sec:hardware} describes the design and implementation of our realistic modulo hardware prototype.
Section \ref{sec:results} presents the results from both simulations and hardware experiments, comparing our proposed method against ideal modulo sampling, direct sampling with realistic ADCs, and classical ADCs with no DR limitation.
Finally, Section \ref{sec:conclusion} offers conclusions and discusses the implications of our findings for future research and practical applications.

Throughout this paper, the following notations are used: The $\ell^2$ norm is denoted as \(\|\cdot\|_2\), and the infinity norm is written as \(\|\cdot\|_\infty\). For a given sequence \(a[n]\) with finite norm, its Discrete-Time Fourier Transform (DTFT) is \(A(e^{j\omega}) := \sum_{n\in\mathbb{Z}} a[n] e^{-j\omega n}\). Similarly, for a function \(x(t)\) with finite norm, its Continuous-Time Fourier Transform (CTFT) is defined as \(X(\omega) := \int_{t\in\mathbb{R}} x(t) e^{-j\omega t} \), we also employ the notation \(\mathcal{F}\{x\}(\omega)\). The inverse operators are IDTFT and ICTFT, respectively. Convolution is denoted by \(\ast\).
\section{Background and System Design}
\label{sec:background}

\subsection{Preliminaries and Problem Statement}

Capturing signals without exceeding an ADC’s DR remains a challenge in signal processing. When DR limits are surpassed, clipping causes data loss. Expanding DR can help but adds quantization noise and requires power-hungry, high-resolution ADCs. A more efficient solution applies a nonlinear operation, like the modulo function, to compress, or “fold,” the signal within a set range. Following studies in \cite{azar2022robust, Azar_Mulleti_Eldar_2022a, mulleti2023hardware}, the modulo operation remaps real values into the interval \([- \lambda, \lambda]\) for any \(\lambda > 0\) as:
\begin{equation}\label{eq:mod}
	\mathcal{M}_\lambda x := ((x+\lambda) \mod 2\lambda) - \lambda .
\end{equation}

Azar et al. \cite{azar2022robust, Azar_Mulleti_Eldar_2022a} demonstrated the potential to ``unfold" a BL signal's samples by sampling just above the Nyquist frequency. They further introduced a signal recovery technique for such situations, named $B^2 R^2$.
Fig. \ref{fig:azar} depicts the outlined strategy for the recovery of modulo-transformed BL signals. An input BL signal undergoes LPF and is then treated with the analog modulo operation.
Subsequently, the signal is sampled at a rate $T_s < T$, where $T$ represents the Nyquist rate, defined as double the maximum frequency present in the bandwidth of $x(t)$.
Additionally, several strategies for the restoration of BL signals have been proposed, like the approach in \cite{Shah_Mulleti_Eldar_2023}, which leverages inherent sparsity properties and Iterative Shrinkage-Thresholding Algorithm (ISTA) to refine the method in \cite{azar2022robust, Azar_Mulleti_Eldar_2022a}. Other methodologies, as discussed in \cite{bhandari2020unlimited,romanov2019above} and \cite{Guo_Bhandari_2023}, introduced signal reconstruction techniques using higher-order differences and iterative signal sieving, respectively. Nevertheless, these methods necessitate sampling rates above Nyquist and exhibit less noise resilience.

\begin{figure}[htb]
	\begin{minipage}[b]{\linewidth}
		\centering
		\includegraphics[width=0.9\columnwidth]{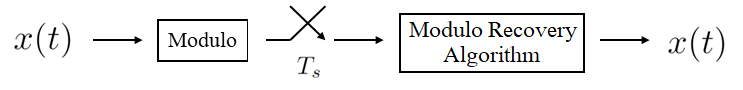}
	\end{minipage}
	\caption{Illustration of the modulo recovery framework for BL signals as discussed in \cite{azar2022robust, Azar_Mulleti_Eldar_2022a, mulleti2023hardware, Shah_Mulleti_Eldar_2023}.}
	\label{fig:azar}
\end{figure}

When applying the modulo operator to a BL signal \(x(t)\), the resulting ``folded" signal \(\mathcal{M}_\lambda x(t)\) exhibits a significantly broadened bandwidth. The ADC employed must therefore be capable of handling these increased frequency components.
Typically, the ADC used is designed for bandwidths significantly higher than that of the input signal \cite{mulleti2023hardware}, leading to samples that closely resemble \((\mathcal{M}_\lambda x)[nT_s]\).

In scenarios where the sampler is realistically capable of only handling the sampling rate \(T_s\), the ``folded" signal must be processed through an LPF before sampling, resulting in \(\text{LPF}(\mathcal{M}_\lambda x)[nT_s]\). This energy loss during filtering can significantly alter measurements, potentially undermining existing modulo recovery techniques that fail to consider such distortions. Our simulations indicate that this error may be orders of magnitude greater than the quantization error typically associated with classical sampling schemes. Therefore, even with perfect modulo recovery, disregarding the high-frequency components leads to recovery outcomes that are inferior to both modulo and classical sampling approaches.
Our prototype aims to overcome the challenges associated with sampling those high-frequency components using only realistic samplers.

\subsection{Sampling Approach}\label{sec:sampling}

Given a BL signal \(x(t)\) with a Nyquist rate of \(T\), our objective is to reconstruct the signal from modulo samples using samplers that operate within the range of \([-\lambda, \lambda]\) and incorporate a realistic sampler. This means that the sampler includes an internal LPF prior to sampling, with a cutoff frequency aligned to the sampling rate without significantly surpassing it.
Our methodology involves adding an analog multiplier with a $T_s$-periodic function $p(t)$, where $T_s < T$, prior to sampling. This will be followed by a LPF to turn the signal into a BL signal with Nyquist rate of $T_s$, resulting in \(y(t) = \text{LPF}(p(t)\mathcal{M}_\lambda x)\). The signal is than sampled at rate $T_s$, resulting in $y[n T_s] =  \text{LPF}(p(t)\mathcal{M}_\lambda x) [n T_s]$. The block diagram is shown in Fig. \ref{fig:sys comb}, and is inspired by the MWC where a similar approach was used for realistic Sub-Nyquist sampling \cite{mishali2010theory, mishali2011xampling}.

The signal \(y(t)\) is BL with a Nyquist rate of \(T_s\), which matches the sampling rate, ensuring that the ADC used only processes frequencies within this range. The following theorem demonstrates that selecting \(p(t)\) as a delta comb in time results in samples that are identical to the ideal pointwise samples of the ``folded" signal.

\begin{figure}[htb]
	\begin{minipage}[b]{\linewidth}
		\centering
		\centerline{\includegraphics[width=\columnwidth]{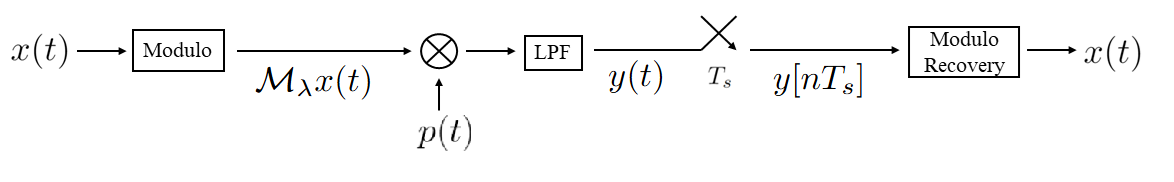}}
	\end{minipage}
	\caption{The proposed block diagram for sampling and modulo recovery using only realistic components, based on the diagram in Fig. \ref{fig:azar} and inspired by the MWC. The signal $x(t)$ it BL with Nyquist rate of $T$, the cutoff frequency of the LPF is $\frac{\pi}{T_s}$ and the sampling rate is $T_s < T$.}
	\label{fig:sys comb}
\end{figure}

\begin{theorem}\label{thm:mixer2}
    Consider the sampling process in Fig. \ref{fig:sys comb}. If $p(t) = \sum_{n\in\mathbb{Z}}{\delta(t-nT_s)}$, than $y[nT_s] = \mathcal{M}_\lambda x[nT_s]$, where $\mathcal{M}_\lambda$ is defined in (\ref{eq:mod}).
\end{theorem}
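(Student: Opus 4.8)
The plan is to work entirely in the Fourier domain and track what the mixer-plus-LPF chain does to the folded signal $g(t) := \mathcal{M}_\lambda x(t)$. First I would write $p(t) = \sum_{n} \delta(t - nT_s)$ and use the Poisson summation formula to express its CTFT as another delta comb, $P(\omega) = \frac{2\pi}{T_s}\sum_{k\in\mathbb{Z}} \delta(\omega - k\omega_s)$ with $\omega_s := 2\pi/T_s$. Then the product $p(t) g(t)$ has CTFT equal to $\frac{1}{2\pi}(P \ast G)(\omega) = \frac{1}{T_s}\sum_{k} G(\omega - k\omega_s)$, i.e. the familiar spectral-replica picture: multiplying by the comb creates shifted copies of $G$ spaced $\omega_s$ apart.

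Next I would apply the ideal LPF with cutoff $\pi/T_s = \omega_s/2$. On the band $[-\omega_s/2, \omega_s/2]$ the sum of replicas $\frac{1}{T_s}\sum_k G(\omega - k\omega_s)$ is exactly the standard aliased spectrum, so the LPF output $y(t)$ satisfies $Y(\omega) = \frac{1}{T_s}\sum_k G(\omega - k\omega_s)$ for $|\omega| \le \omega_s/2$ and $0$ otherwise. The key observation is then that this is precisely the CTFT of the bandlimited interpolation of the sequence $g(nT_s)$: by the sampling theorem, the function whose CTFT equals the periodized spectrum $\frac{1}{T_s}\sum_k G(\omega - k\omega_s)$ restricted to one period is the sinc-interpolation $\sum_n g(nT_s)\,\mathrm{sinc}\!\big((t-nT_s)/T_s\big)$, which agrees with $g$ at all the sample points $t = nT_s$. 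Evaluating $y$ at $t = mT_s$ therefore gives $y(mT_s) = g(mT_s) = \mathcal{M}_\lambda x(mT_s)$, which is the claim.

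A cleaner way to package the last step, which I would probably use instead, is to invoke the ICTFT directly: $y(mT_s) = \frac{1}{2\pi}\int_{-\omega_s/2}^{\omega_s/2} \big(\frac{1}{T_s}\sum_k G(\omega - k\omega_s)\big) e^{j\omega m T_s}\, d\omega$. Since $e^{j\omega m T_s}$ is $\omega_s$-periodic in $\omega$, I can fold each replica back into the central period and recombine the integral into $\frac{1}{2\pi T_s}\int_{-\infty}^{\infty} G(\omega) e^{j\omega m T_s}\, d\omega = \frac{1}{T_s} g(mT_s)$ — and then I need to be careful about the normalization: with $p(t)$ a comb of unit-area deltas the natural statement actually carries a $1/T_s$, so either the theorem implicitly absorbs a $T_s$ into the LPF gain or $p(t)$ should be read as $T_s\sum_n \delta(t-nT_s)$; I would state the normalization convention explicitly at the start so the constant comes out to $1$.

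The main obstacle is making the argument rigorous despite $g(t) = \mathcal{M}_\lambda x(t)$ not being bandlimited and, in general, not being integrable or even continuous (the modulo operator introduces jump discontinuities, so $G(\omega)$ decays slowly and the replica sum $\sum_k G(\omega - k\omega_s)$ converges only conditionally). Multiplication by a delta comb is not literally defined for a function with jumps at the sample instants, so some care — a limiting argument with a smoothed comb, or restricting attention to signals for which $g$ is continuous at every $nT_s$ (generic $\lambda$), or simply interpreting the whole chain distributionally — is needed to justify the formal manipulations. I expect the paper handles this the way the MWC literature does: treat $p(t)g(t)$ as a weighted impulse train $\sum_n g(nT_s)\delta(t - nT_s)$ by definition, after which the LPF output is manifestly the sinc interpolation of $\{g(nT_s)\}$ and evaluation at $t=mT_s$ is immediate. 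That reframing sidesteps the convergence subtleties and is the step I would lead with.
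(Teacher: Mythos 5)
Your proposal follows essentially the same route as the paper's proof: mixing with the delta comb periodizes the spectrum so that, by Poisson summation, the baseband content equals the DTFT of the ideal modulo samples $\mathcal{M}_\lambda x[nT_s]$; the LPF then retains exactly this baseband, and sampling $y(t)$ at its Nyquist rate $T_s$ yields a sequence with the same DTFT, hence the same samples. Your caveats about the $1/T_s$ normalization (unit-area deltas versus a passband gain of $T_s$ in the LPF) and about the distributional subtleties of multiplying a discontinuous folded signal by an impulse train are fair points that the paper simply absorbs into loosely normalized transform identities, but they do not alter the substance of the argument.
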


\begin{proof}
From the assumption on $p(t)$, we know that $P(\omega) = \sum_{l\in\mathbb{Z}}{\delta\Big(\omega - \frac{2\pi l}{T_s}\Big)}$. Denote $z(t) = p(t) \mathcal{M}_\lambda x(t)$, using CTFT and DTFT \cite{eldar2015sampling} we have 
	\begin{equation}\label{eq:thm1_1}
			Z (\omega) = \sum_{l\in\mathbb{Z}}{ \mathcal{F}\{\mathcal{M}_\lambda x\}\Big(\omega - \frac{2\pi l}{T_s}\Big)}.
	\end{equation}
Denote  $a[n] = \mathcal{M}_\lambda x[nT_s]$ we can from Poisson's formula that

\begin{equation}\label{eq:thm1_2}
			A\left(e^{j\omega T_s}\right) = \sum_{l\in\mathbb{Z}}{ \mathcal{F}\{\mathcal{M}_\lambda x\}\Big(\omega - \frac{2\pi l}{T_s}\Big)}.
	\end{equation}
Combining (\ref{eq:thm1_1}) and (\ref{eq:thm1_2}) yields
\begin{equation}\label{eq:thm1_3}
			A\left(e^{j\omega T_s}\right) = Z(\omega).
	\end{equation}
Since $y(t) = \text{LPF}(z(t))$ with cutoff frequency of $\frac{\pi}{T_s}$, we get that

\begin{equation}\label{eq:thm1_4}
		Y(\omega) = Z (\omega), \quad |\omega|\le\frac{\pi}{T_s}.
	\end{equation}
We sample $y(t)$ at rate of $T_s$, which is its Nyquist rate, thus 

\begin{equation}\label{eq:thm1_5}
		\text{DTFT}\left\{ y[nT_s] \right\}\left(e^{j\omega T_s}\right) = Y (\omega), \quad |\omega|\le\frac{\pi}{T_s}.
\end{equation}
From (\ref{eq:thm1_4}) and (\ref{eq:thm1_5}), we see that $a[n]$ and $y[nT_s]$ have the same DTFT therefore they are  identical.
\end{proof}


The theorem above provide an alternative approach to measuring \(\mathcal{M}_\lambda x[nT_s]\) using an ADC that does not need to handle higher frequencies. Once these samples are obtained, any existing modulo recovery method can be applied to deduce \(x[nT_s]\) and, consequently, the input signal \(x(t)\).



\subsection{Modulo Recovery and Quantization}\label{sec:mod_rec}

According to Theorem \ref{thm:mixer2}, the samples obtained through our approach match the ideal samples of the folded signal, allowing us to apply any existing modulo recovery method for BL signals. For our system, we adopt the method proposed by Bernardo et al. \cite{bernardo2024modulo}, which addresses quantization noise. This approach utilizes extra-bit information for signal reconstruction. Their system generates an additional 1-bit signal that indicates whether a fold has occurred since the last measurement, though it does not provide the number or direction of folds. Building on this concept, they developed a recovery method that guarantees accurate modulo recovery even in the presence of quantization noise, assuming an oversampling rate greater than 3 and at least 4 bits of resolution. To apply this method, we incorporated the extra-bit signal in our system, as depicted in Fig. \ref{fig:sys comb extrabit}.

\begin{figure}[htb]
	\begin{minipage}[b]{\linewidth}
		\centering
		\centerline{\includegraphics[width=\columnwidth]{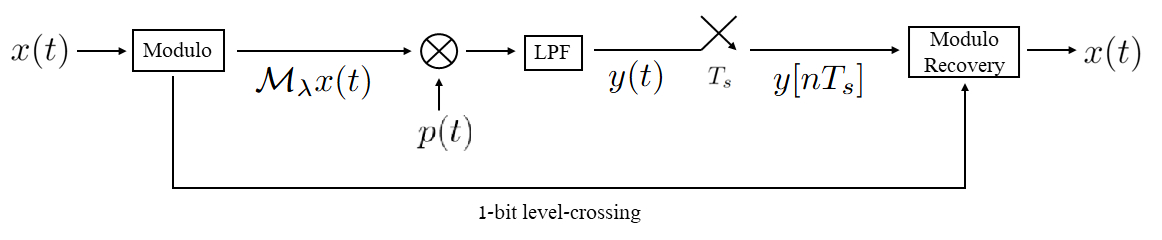}}
	\end{minipage}
	\caption{Block diagram incorporating extra-bit information for modulo recovery, adapted from Fig. \ref{fig:sys comb}, illustrating the integration of a 1-bit folding indicator with the existing system components to insure under quantization noise.}
	\label{fig:sys comb extrabit}
\end{figure}

In classical sampling the Mean Squared Error (MSE) quantization error is given as
\begin{equation}
	\label{eq:claasical_quant}
	E_{\text{classical}} = \frac{1}{\text{OF}} \frac{1}{12} \Delta^2_{\text{classical}}
\end{equation}
where $\Delta_{\text{classical}}$ is the quantization step and \(\text{OF}\) is the oversampling rate. Note that in this case, the DR encompasses the entire signal range, and all $b$ bits are utilized for quantization, meaning
\(
	\Delta_{\text{classical}} = \frac{1}{2^b - 1}  \|x\|_\infty 
\).
For modulo sampling we sample only the DR of $[-\lambda,\lambda]$, meaning the quantization error for modulo sampling is

\begin{equation}
	\label{eq:mod_quant}
	E_{\text{mod-Q}} = \frac{1}{\text{OF}} \frac{1}{12} \Delta^2_{\text{mod}}
\end{equation}
where
\(
	\Delta_{\text{mod}} = \frac{1}{2^b - 1} \lambda.
\)
Setting $\lambda = \frac{\|x\|_\infty}{\text{OF}-2}$ as in \cite{bernardo2024modulo}, yields the modulo quantization error

\begin{equation}
	E_{\text{mod-Q}} =\frac{1}{\text{OF}} \frac{1}{12(2^b - 1)^2} \Big( \frac{\|x\|_\infty}{\text{OF} - 2}\Big)^2.
\end{equation}
The MSE of the modulo sampling is \(\mathcal{O}(\text{OF}^{-3})\), which is a significant improvement over the classical MSE that scales as \(\mathcal{O}(\text{OF}^{-1})\).

Theorem \ref{thm:mixer2} assumes that $p(t)$ is a delta comb. In practice the number of combs is finite and we will assume it consists of $N=2000$ deltas in frequency, meaning \(
	p(t) = \sum_{k=-N}^{N}{e^{j \frac{k}{T_s} t} }\). This lead to an error between the measurements $y[n T_s]$ and the desired values $\mathcal{M}_\lambda x[n T_s]$. We denote the MSE of the difference as $E_{\text{mod-HF}}$. Because quantization occurs after the analog processing step, the errors are independent, resulting in an overall MSE for the modulo approach given by

\begin{equation}
\label{eq:mod_mse}
	E_{\text{mod}} = E_{\text{mod-HF}} + E_{\text{mod-Q}}.
\end{equation}

In the next section, we present the hardware prototype that implements our approach, followed by the results from both simulations and the hardware in Section \ref{sec:results}.

\section{Realistic Modulo Hardware Prototype}
\label{sec:hardware}

In this section, we provide details on the system architecture of the hardware prototype. The block diagram is shown in Fig. \ref{fig:sys comb extrabit}, and the hardware setup is illustrated in Fig. \ref{fig:HW_board}. In addition to our proposed sampling system, the hardware includes three additional channels: a classical sampler with infinite DR, a modulo sampler with an ideal pointwise sampler capable of handling much higher frequencies as seen in Fig. \ref{fig:azar}, and a modulo sampler where the folded signal passes through a LPF before sampling, simulating the direct use of a practical ADC without our approach.

\begin{figure*}[ht]
  \centering
  \includegraphics[width=0.8\textwidth]{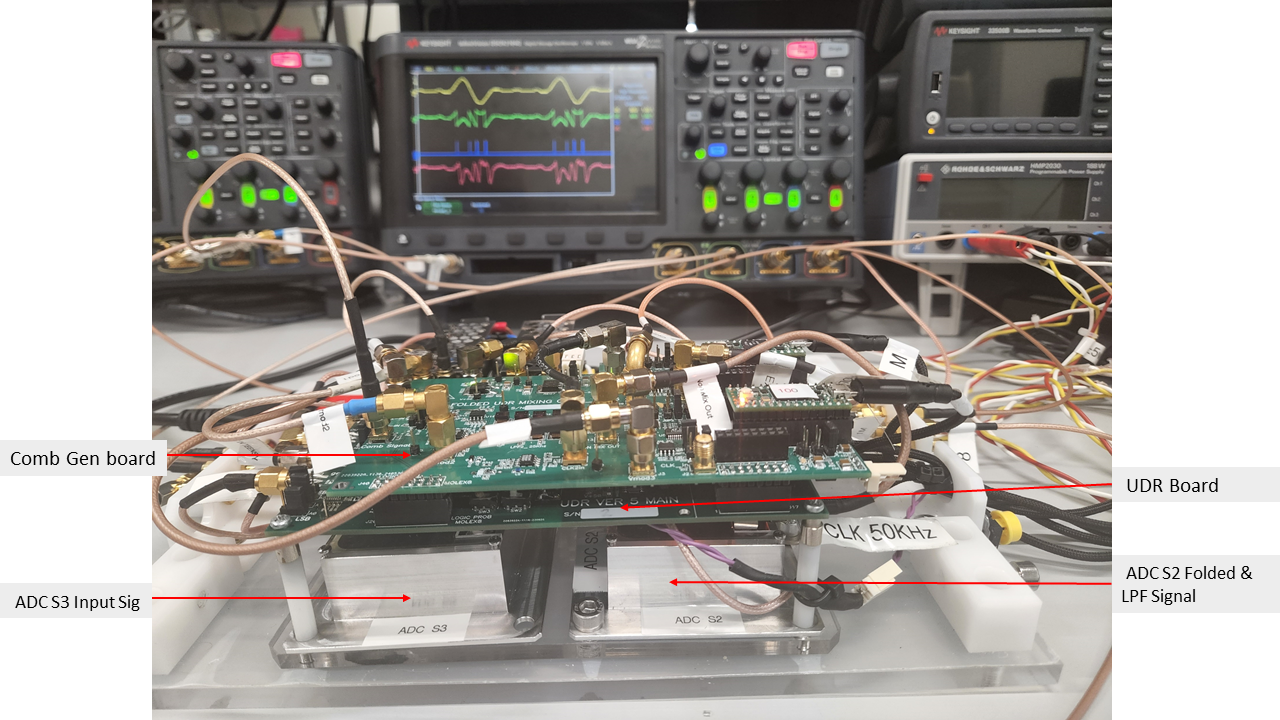}
  \includegraphics[width=0.8\textwidth]{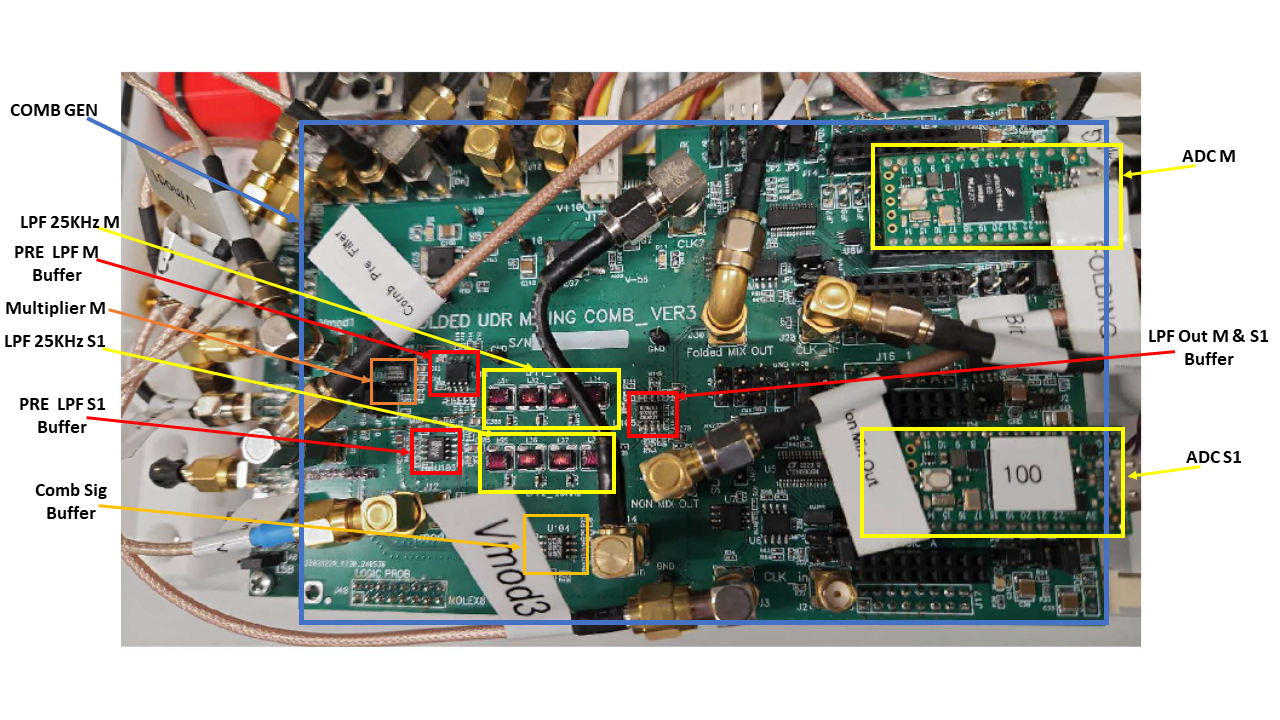}
  \caption{Realistic Modulo hardware board with highlighted components shown from two perspectives: the top image highlights key components such as the comb block described in Section \ref{sec:HW_comb}, while the bottom image highlights different components from a side view.}
  \label{fig:HW_board}
\end{figure*}

In our setting the sampling rate will be $T_s=$50KHz. Fig. \ref{Fig: System Block Diagram} depicts the setup with four different analog input channels. All four ADCs are synchronized with the sampling clock at 50KHz generated by the TEENSY-M. The ADCs are capable of sampling at 10MHz, allowing them to handle frequencies much higher than the current sampling rate. The analog modulo operation uses a board, as described in \cite{mulleti2023hardware}. This will be referred to as Unlimited Dynamic Range (UDR) in the diagram. We first present the high-level system architecture, followed by a detailed discussion of the hardware components.

\begin{figure}[ht]
  \centering
  \includegraphics[width=1\columnwidth]{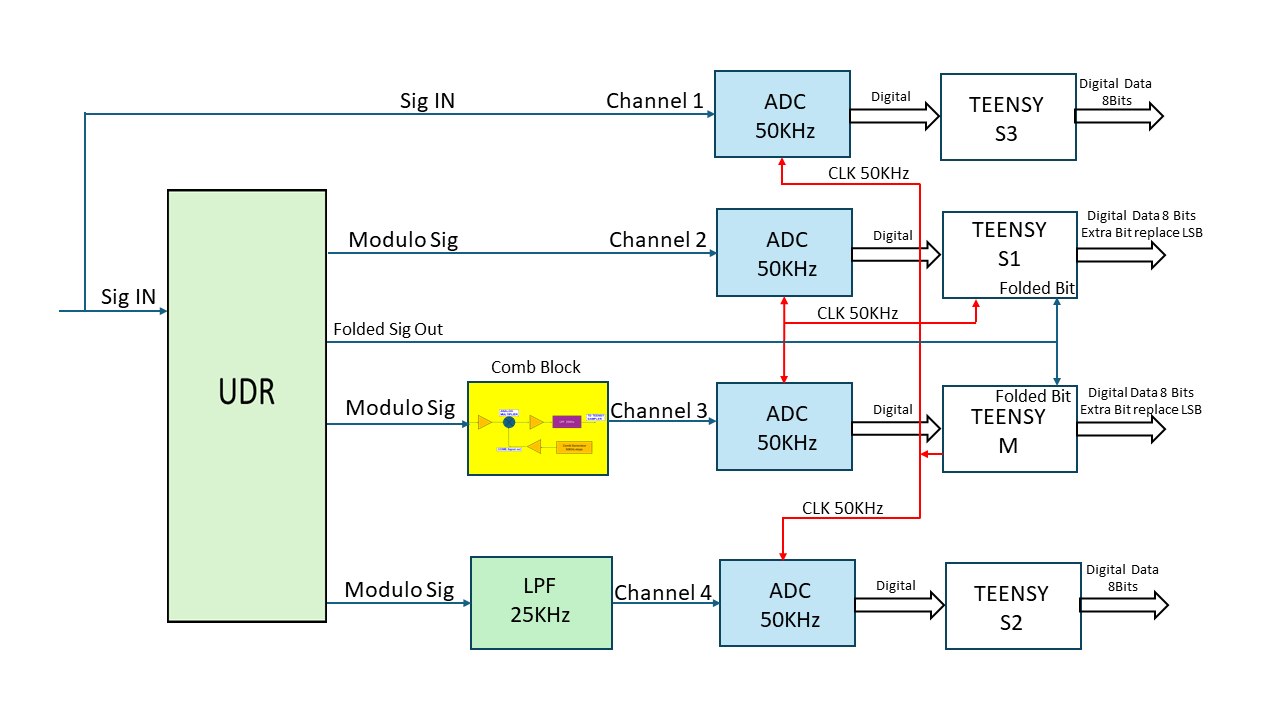}
  \caption{Four analog input signals with four synchronized 50KHz ADC samplers.}
  \label{Fig: System Block Diagram}
\end{figure}

\subsection{High-Level Design}\label{sec:high_level}

Sampler 1 (TEENSY-M) in channel 3 is the proposed sampling approach, the modulo signal multiplied by a 50KHz comb generator, followed by a 25KHz LPF as shown in Fig. \ref{fig:sys comb extrabit}.
Sampler 2 (TEENSY-S1) receives the modulo signal generated by the UDR at Channel 2, functioning as an ideal sampler. This channel implements the setup shown in Fig. \ref{fig:azar} and follows current modulo sampling systems \cite{azar2022robust, Azar_Mulleti_Eldar_2022a, mulleti2023hardware, Shah_Mulleti_Eldar_2023}.
Sampler 3 (TEENSY-S3) samples the input signal (Sia). This is traditional sampling. 
Sampler 4 (TEENSY-S2) samples the UDR modulo signal with a 25KHz LPF at Channel 4, representing a realistic sampler with a maximum sampling frequency of 50KHz. This configuration removes higher frequency components, resulting in errors that current algorithms do not account for. The 50KHz clock synchronizing all four ADCs is generated by TEENSY-M.
We implement the extra bit described in \cite{bernardo2024modulo} and discussed in Section \ref{sec:mod_rec} as part of our recovery approach; more details are provided later in this section.

\begin{figure}[ht]
  \centering
  \includegraphics[width=1\columnwidth]{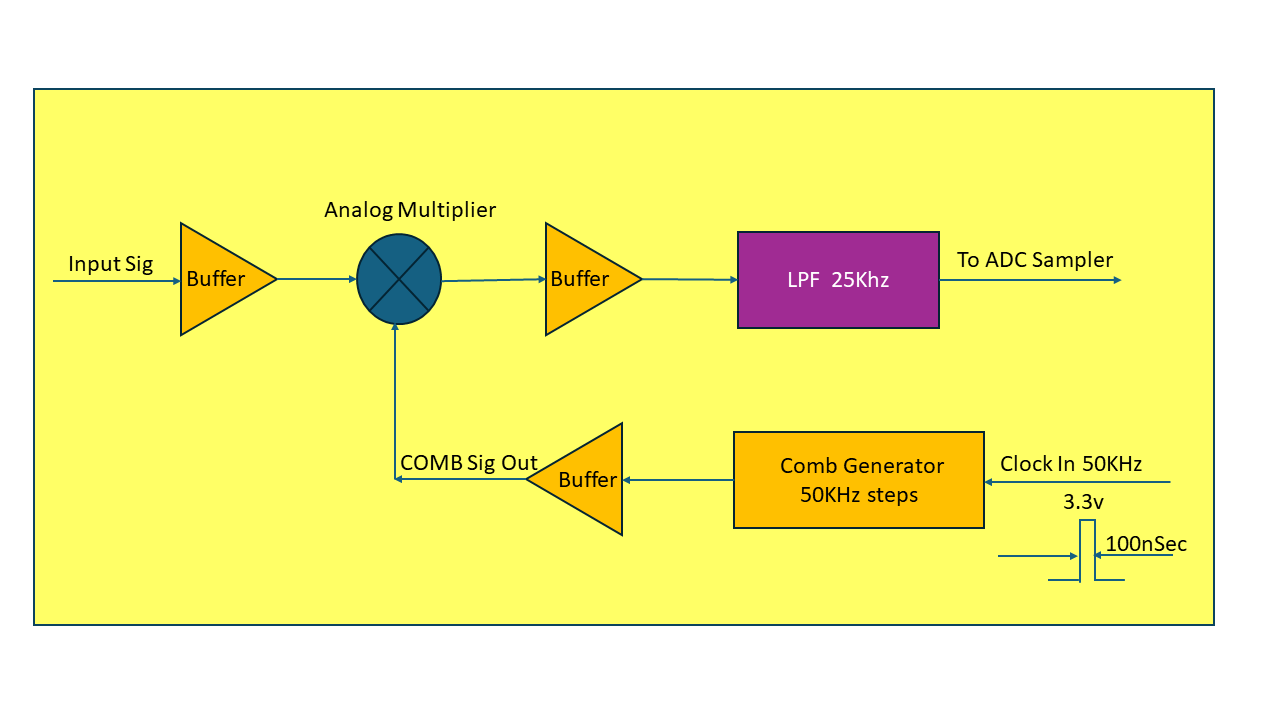}
  \caption{Multiplied signal with COMB Generator at 50KHz steps, passed through a 25KHz LPF, then to the 50KHz ADC sampler.}
  \label{Fig: COMB Gen Block Diagram}
\end{figure}


\begin{figure}[ht]
  \centering
  \includegraphics[width=1\columnwidth]{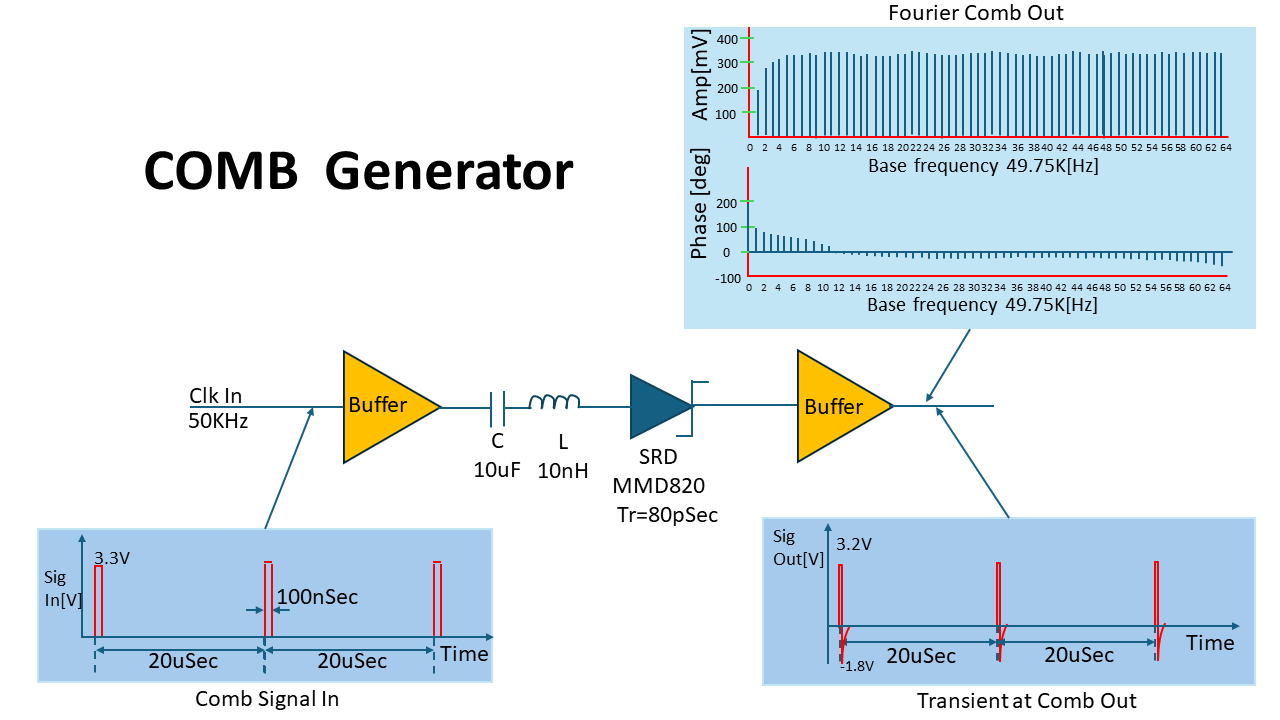}
  \caption{COMB Generator with 50KHz steps and SRD.}
  \label{Fig: Comb Gen Diagram}
\end{figure}

\subsection{Multiplier and Comb Generator Design}
\label{sec:HW_comb}

Next we present the hardware implementation of generating the delta comb and analog multiplier as outlined in Section \ref{sec:sampling} and shown in Figs. \ref{fig:sys comb} and \ref{fig:sys comb extrabit}. The design involves a Step-Recovery Diode (SRD) in Fig. \ref{Fig: Comb Gen Diagram}, based on a comb generator and multiplier circuits reactively terminated at the output port with higher-order harmonics. Power is partially reflected and recombined to produce stronger harmonics. The damping factor $\ddot{I}$ of the SRD, a primary driver in frequency-multiplier or comb-generator design, is defined as follows:

\begin{equation}
\ddot{I} = \left( \frac{1}{2R} \right) \left( \frac{L}{C} \right)^{  \frac{1}{2} }
\end{equation}
where \(L\) is the diode’s inductance, \(C\) is the diode’s reverse capacitance, and \(R\) is the load resistance. The damping factor $\ddot{I}$ is between 0.4 and 0.5. If damping is too low, stability problems can arise; if too high, the output pulse becomes too long. An SRD, also known as a snap-off diode or charge-storage diode, generates extremely short pulses. The main phenomenon used in SRDs is the storage of electric charge during forward conduction due to the finite lifetime of minority carriers in semiconductors.

The SRD Diode used in the prototype is the MACOM MMD820, with a lifetime \(\tau = 60\)ns and a transition time \(Tr = 80\)ps, enabling frequency pulses over 1GHz.

\begin{figure}[ht]
  \centering
  \includegraphics[width=1\columnwidth]{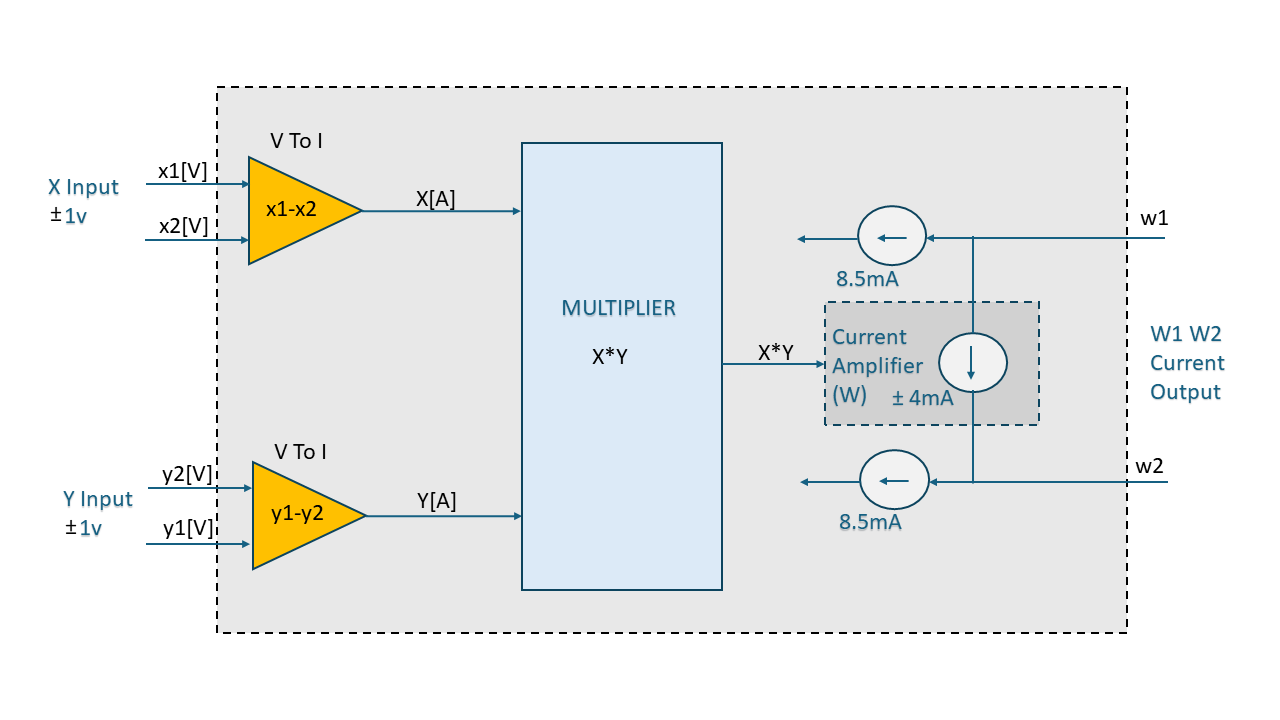}
  \caption{Functional equivalent of the analog multiplier AD834.}
  \label{Fig: AD834 Diagram}
\end{figure}

In Fig. \ref{Fig: AD834 Diagram}, the analog multiplier used is the AD834, comprised of three differential signal interfaces: two voltage inputs (X = X1 - X2, Y = Y1 - Y2) and the current output (W). The current flows in the direction shown in Fig. \ref{Fig: AD834 Diagram} when X and Y are positive. The outputs (W) have a standing current of typically 8.5mA. The input voltages are first converted to differential currents that drive the trans-linear core, with the equivalent resistance of the voltage-to-current (V-I) converters being about 285$\Omega$. The output current W is the linear product of input voltages (X and Y) divided by (1V)\(^2\) and multiplied by the scaling current of 4mA. With inputs specified in volts, the simplified expression is W = (XY)4mA. The outputs appear as a differential pair of currents at open collectors, requiring external current-to-voltage conversion for a single-ended, ground-referenced voltage output.

\subsection{Extra-bit Information}

When the input signal is folded, this event is captured using an interrupt on the TEENSY processor. The occurrence of a fold replaces the least significant bit (LSB) of the sampled word with the next clock signal to the ADC, which is generated by the processor at a frequency of 50KHz. This extra-bit indicates the folding points in time. It is important to note that the LSB only indicates that a fold has occurred, without specifying the direction or the number of folds. After the LSB information has been replaced, the processor clears the previously stored folding bit information. This process continues each time the input signal exceeds a threshold and is folded. During digital processing, the system interprets the LSB as folding information rather than part of the measurement.

\subsection{LPF Implementation and Frequency Profile}

The 25KHz LPF depicted in Fig. \ref{Fig: LPF25KHz} is a 7th order Butterworth filter. It features no ripple in both the passband and stopband, ensuring a frequency response that is as flat as possible within the passband. Refer to Fig. \ref{Fig: AttenVerFreqLPF} for the attenuation versus frequency of our LPF.

\begin{figure}[ht]
  \centering
  \includegraphics[width=1\columnwidth]{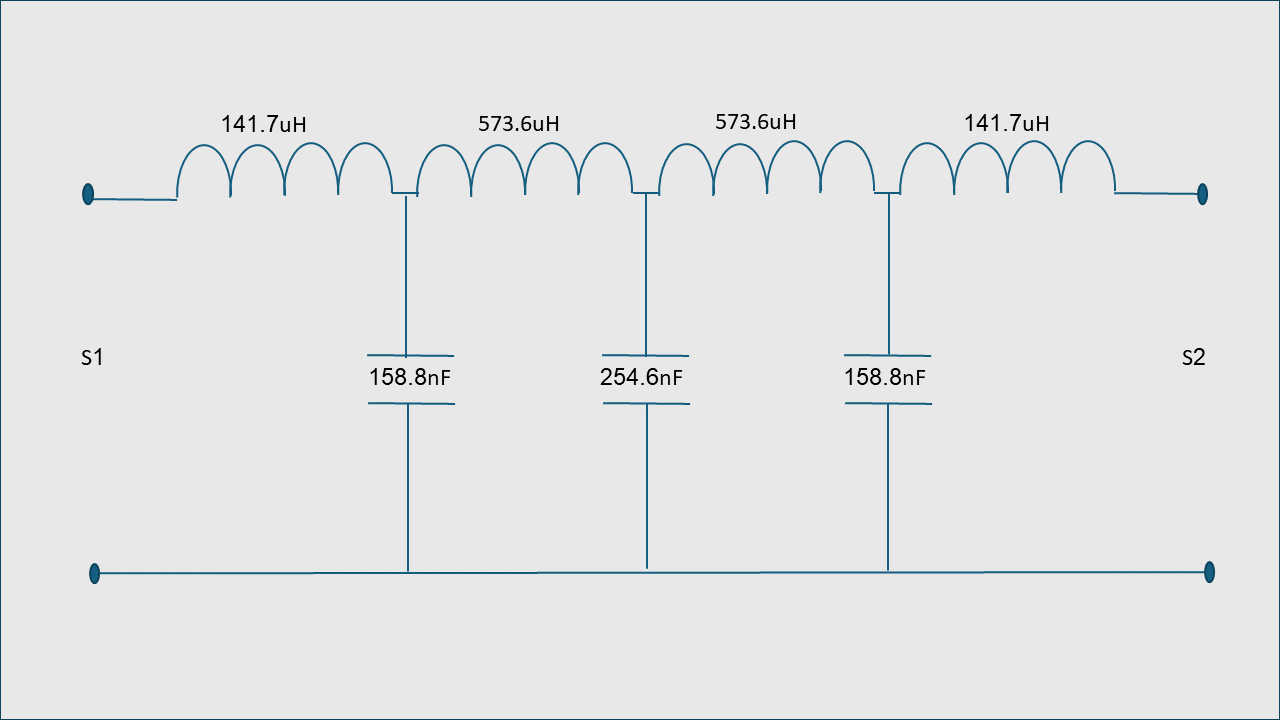}
  \caption{25KHz LPF hardware implementation. }
  \label{Fig: LPF25KHz}
\end{figure}

\begin{figure}[ht]
  \centering
  \includegraphics[width=1\columnwidth]{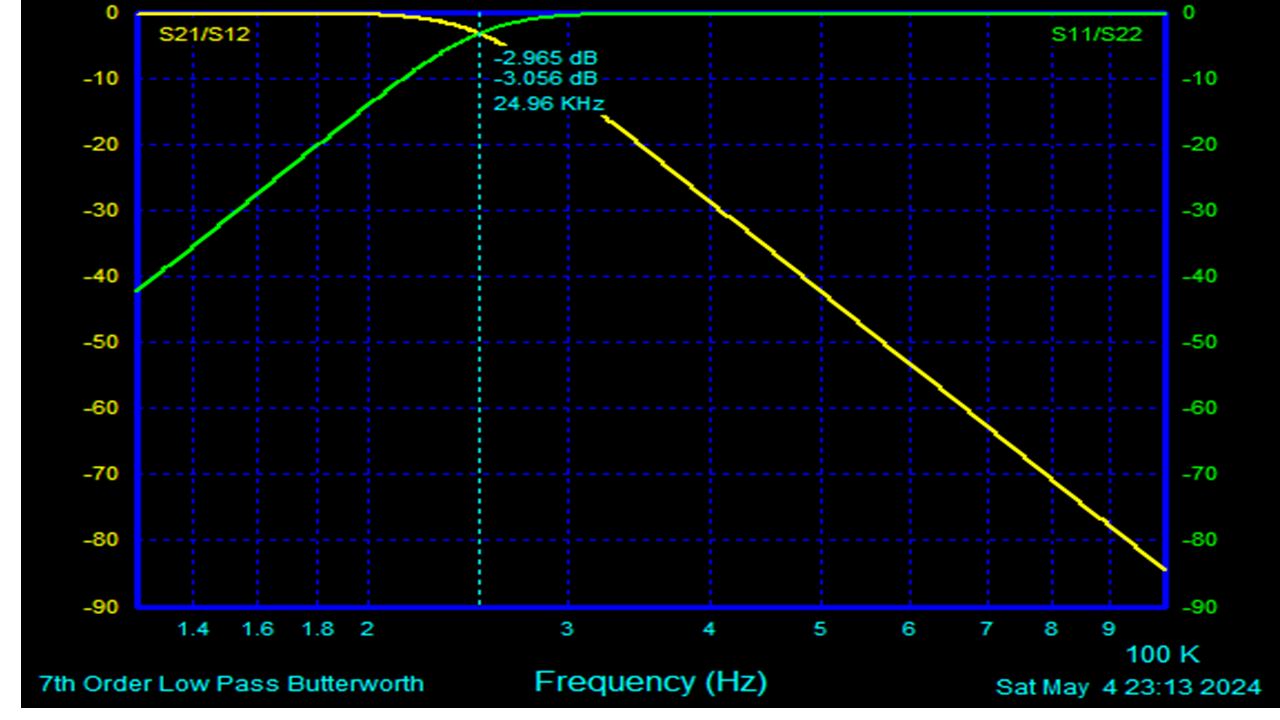}
  \caption{Attenuation versus frequency LPF 25KHz S parameters. In the measured response S21 and S12, the cut-off frequency (-3dB) is about 25KHz.}
  \label{Fig: AttenVerFreqLPF}
\end{figure}

\begin{table*}[ht]
\centering
\begin{tabular}{l|l|l}
{\bf Component} &{\bf Model number} &{\bf Maker}\\ \hline
ADC 12 bits 10Msps &LTC1420 &LINEAR TECHNOLOGY\\
SRD &MMD820 &MACOM\\
4 Quadrant Multiplier &AD834 &ANALOG DEVICES \\ 
LPF 25KHz &7th order Butterworth &Discrete components\\ 
MicroController board &Teensy4 &ARM
\end{tabular}
\caption{Hardware Components List.}
\label{tab:comp}
\end{table*}

\subsection{Hardware Implementation}

The following figures showcase the final hardware prototype and its operation. Fig. \ref{fig:HW_board} presents the hardware, providing a tangible representation of the design concepts discussed in the previous sections.
The oscilloscope snapshot in Fig. \ref{fig:oscilloscope_snapshot} illustrates the hardware in action. The yellow line represents the input BL signal with maximal frequency of 1KHz, while the green line shows the folded signal after the modulo operation. The blue line displays the folded signal multiplied by the comb, with the overlaid green and blue lines clearly demonstrating the effect of this multiplication. The red line represents the signal after passing through the LPF filter, which is the signal to be sampled as part of our proposed pipeline. As predicted by Theorem \ref{thm:mixer2}, the samples from the green and red lines are expected to align, and this alignment is visually confirmed in the snapshot. Table \ref{tab:comp} provides a list of the components used, along with their model numbers and manufacturers.


\begin{figure}[ht]
  \centering
  \includegraphics[width=1\columnwidth]{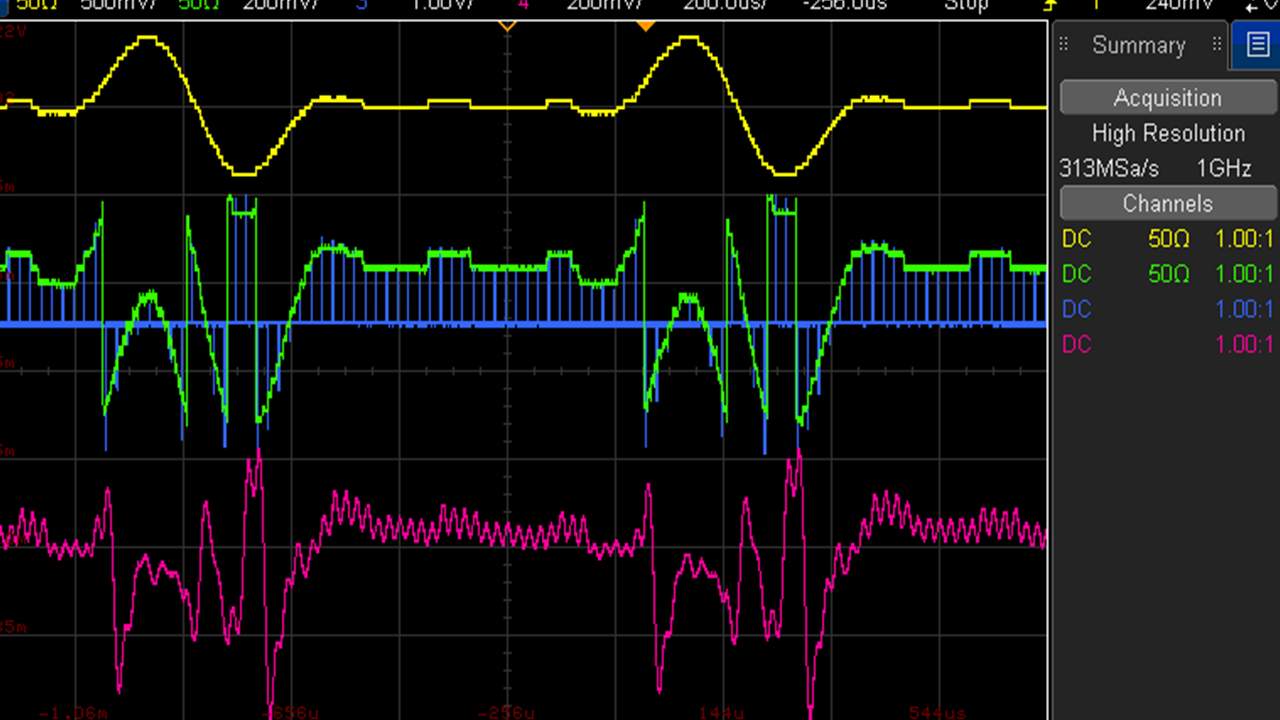}
  \caption{Oscilloscope snapshot: yellow—input BL signal, green—folded signal, blue—folded signal after comb multiplication, red—post-LPF signal for sampling.}
  \label{fig:oscilloscope_snapshot}
\end{figure}

\section{Results}
\label{sec:results}

\subsection{Simulated Results}

We generated a total of 500 random BL signals with maximal frequency of 5 KHz. The signal is of the form
\begin{equation}
	x(t) = \sum_{i=1}^{98}{a_i \text{sinc}\Big(\frac{t-i T}{T}\Big)} \quad 0\le t\le 1
\end{equation}
where $T$ is the Nyquist rate of $10^{-4}$ second, $\{a_i\}$ are i.i.d uniform in the range of $[-\frac{1}{2}, \frac{1}{2}]$. Our analysis covers scenarios with 6 and 8 bits, various oversampling rates, and the inclusion or exclusion of the comb generator.

In the recovery approach section, we specify that one bit is used for cross-level information in our modulo sampling setup. Thus, in settings where the total number of bits is defined, the modulo method allocates one bit for this purpose and the remainder for sampling, while classical sampling uses all bits directly for sampling. Despite this bit allocation, our results show that modulo sampling achieves a better MSE compared to the classical method.

Fig. \ref{fig:sim_raw} showcase the results in both 6-bit and 8-bit systems without the use of the comb generator. The term ``classical-theoretical" refers to the quantization error as calculated theoretically, as shown in equation (\ref{eq:claasical_quant}). In contrast, ``classical-live" represents the practical quantization error, which aligns closely with the theoretical predictions. ``Modulo-live" captures the error observed during practical modulo sampling, while ``modulo-ideal sampler" denotes the error when using an ideal modulo sampler, demonstrating superior performance over the classical approach which assumes infinite DR. Despite this, the overall error in the modulo method remains higher, primarily due to \(E_{\text{mod-HF}}\), which shown in the figure as ``mod-HF". This underscores the significant impact of this error and emphasizes the need for a comb generator to optimize performance.

The performance comparison with the use of a comb generator is presented in Fig. \ref{fig:sim_mixer},  highlights a substantial improvement. The labeling remains consistent with Fig. \ref{fig:sim_raw}. The inclusion of a comb generator significantly reduces \(E_{\text{mod-Q}}\), bringing it well below the quantization error. As a result, the final error in modulo recovery approaches that of an ideal sampler and is considerably lower than the error observed in the classical approach for oversampling rate of at least 5. This demonstrates the effectiveness of the comb generator in improving the performance of modulo sampling.

The error \(E_{\text{mod-HF}}\) occurs during the LPF stage and is unaffected by the quantization step, as those are independent. For systems with practical bit counts, such as 8-bit ADCs, achieving a lower \(E_{\text{mod-HF}}\) is crucial to reduce the overall reconstruction error, as shown in (\ref{eq:mod_mse}). By employing our approach, which effectively mitigates \(E_{\text{mod-HF}}\), the final reconstruction error closely matches that of the ideal modulo sampler. This enables us to surpass the performance of classical sampling methods while using realistic components.

\begin{figure*}
	\centering
	\includegraphics[width=0.4\textwidth]{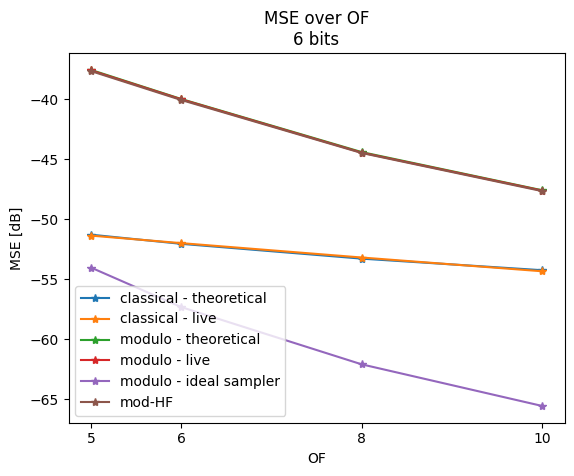}
	\includegraphics[width=0.4\textwidth]{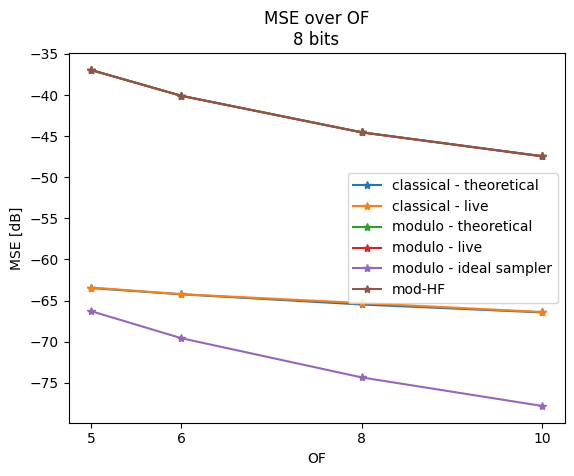}
	\caption{Comparison of quantization errors in 6-bit (left) and 8-bit (right) systems without the comb generator.}
	\label{fig:sim_raw}
\end{figure*}


\begin{figure*}
	\centering
	\includegraphics[width=0.4\textwidth]{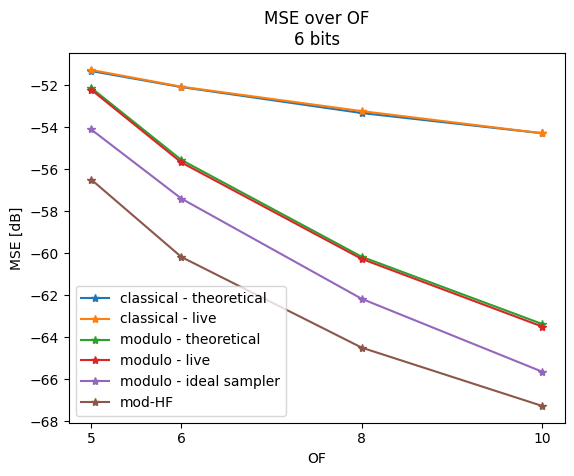}
	\includegraphics[width=0.4\textwidth]{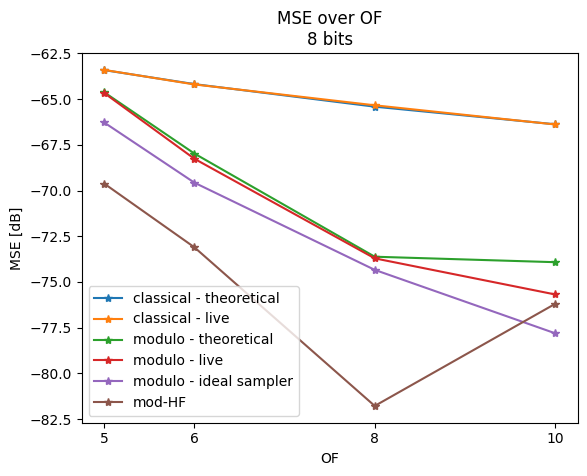}
	\caption{ Comparison of quantization errors in 6-bit (left) and 8-bit (right) systems incorporating the comb generator.
}
	\label{fig:sim_mixer}
\end{figure*}

\subsection{Hardware Results}
\label{sec:Hw_results}

We generated a test BL signal with a maximum frequency of 1KHz and passed it through the hardware setup as shown in Fig. \ref{fig:oscilloscope_snapshot}. The signal was then sampled using the four ADCs mentioned in \ref{sec:high_level}.
In Fig. \ref{fig:HW_measure}, we present the samples obtained from Samplers 1, 2, and 3. Sampler 1 represents our proposed realistic hardware pipeline, Sampler 2 captures the folded signal measured with an ideal sampler, and Sampler 3 records the folded signal after passing through a LPF, as would be the case with a non-ideal, realistic sampler. A strong similarity is observed between the measurements from Samplers 1 and 2, as predicted by Theorem \ref{thm:mixer2}. Furthermore, the samples from Sampler 3 display significantly different measurements, highlighting the necessity of the proposed hardware configuration. This discrepancy further supports the findings from our simulations, underscoring the importance of the hardware design in achieving accurate signal recovery.

We used straightforward unwrapping method \cite{tribolet1977new, goldstein1988satellite} for the data from Samplers 1 and 2.
Sampler 1, representing our proposed realistic hardware pipeline, we applied a standard unwrapping algorithm, unwrapping only when the extra-bit information indicated a fold. For Sampler 2, which measured the folded signal with an ideal sampler, we also used the standard unwrapping algorithm, but without relying on extra-bit information.
We also included results from Sampler 4, which sampled the input signal using a classical sampling approach.
The recovered signals are plotted in Fig. \ref{fig:HW_rec}, showcasing the recovery of the modulo signal using realistic hardware, an ideal sampler, and a classical sampler. Subfigure (a) displays the recovered BL signal, while subfigure (b) shows all the signals after applying a digital LPF corresponding to the known maximum frequency of the input signal. To ensure an accurate comparison, the signals have been aligned to account for a slight unintentional delay introduced by the hardware.

\begin{figure}[ht]
  \centering
  \includegraphics[width=1\columnwidth]{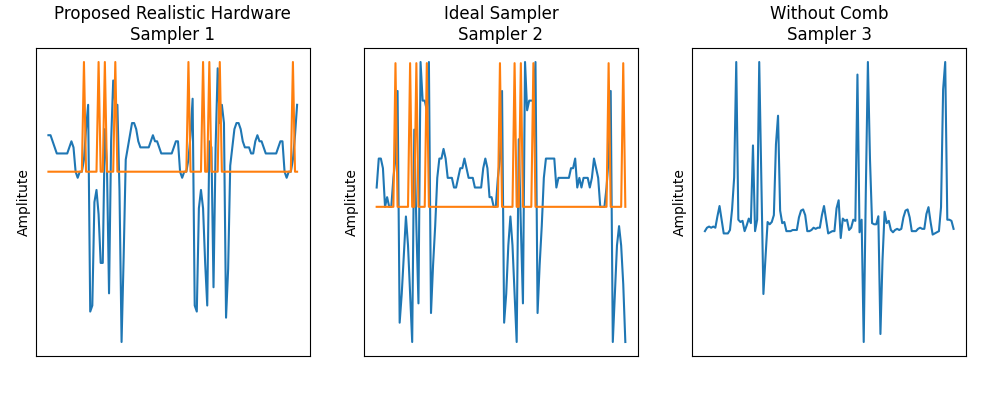}
  \caption{Comparison of sampled signals: Right shows the proposed realistic sampler (Sampler 1) with 7-bit samples (blue) and extra-bit information (orange). Middle depicts the ideal sampler (Sampler 2) with similar results. Left illustrates the realistic case (Sampler 4) with noticeably different measurements.}
  \label{fig:HW_measure}
\end{figure}

\begin{figure}[ht]
  \centering
  \includegraphics[width=1\columnwidth]{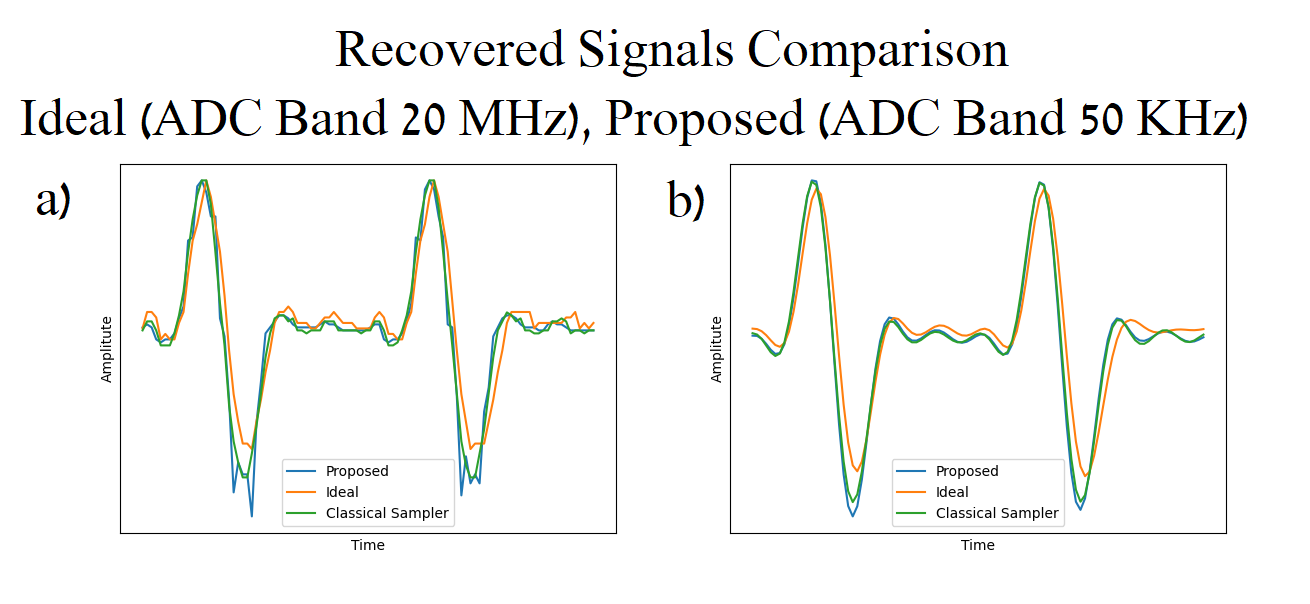}
  \caption{Comparison of recovered signals: (a) Recovered BL signal using Sampler 1 (proposed approach with realistic components), Sampler 2 (ideal sampler), and Sampler 4 (classical sampling). (b) Signals after applying a digital LPF for the input signal's maximum frequency.}
  \label{fig:HW_rec}
\end{figure}

\section{Conclusion}
\label{sec:conclusion}

In this paper, we tackled the critical issue of high-frequency components introduced by the modulo operator. Existing methods often overlook this, assuming an ideal sampler that can handle much higher frequencies, which leads to the unnecessary use of more powerful, energy-intensive ADCs. Our sampling approach employs an analog mixer and a LPF to produce a BL signal, with samples at its Nyquist rate aligning with the ideal samples of the modulo signal. This ensures that the ADC only needs to process frequencies within the intended sampling range. Following this step, any existing modulo recovery method for BL signals can be applied effectively. We implemented this approach in hardware and validated it through experiments, underscoring the necessity of addressing high-frequency components in the sampling process.
We also demonstrated that the proposed method achieves performance comparable to the ideal modulo sampler and surpasses the classical sampler without DR limitations.
Our findings show that modulo recovery can be achieved efficiently using realistic ADCs, making this method highly suitable for real-world applications.


\bibliographystyle{IEEEtran}
\bibliography{refs}

\end{document}